\newtheorem{thm}{Theorem}[section]
\newtheorem{cor}[thm]{Corollary}
\newtheorem*{ack}{Acknowledgements}
\theoremstyle{definition}
\newtheorem{example}[thm]{Example}
\begin{document}
\title{The complementarity relations in a multi-path interferometer with quantum memory}

\author{Yue Sun$^{1}$, Ming-Jing Zhao$^{2*}$, Peng-Tong Li$^{1}$ \\
        \small $^{1}$School of Mathematics, Nanjing University of Aeronautics and Astronautics, Nanjing 210016, PR China \\
        \small $^{2}$School of Science, Beijing Information Science and Technology University, Beijing 102206, PR China \\
        \small $^{*}$Corresponding author: {zhaomingjingde@126.com} \\
}

\begin{abstract}
The complementarity relations impose the constraints on different aspects of quantum states.
We study the complementarity relation within a multi-path interferometer that includes detectors and quantum memory. Here we consider the mixed states as the input states. We establish a duality relation between the visibility and the path distinguishability. Based on this duality, two triality relations, one is related with visibility, path distinguishability, and
mixedness, the other is related with visibility, path distinguishability, and entanglement, were
derived respectively. Therefore the role of entanglement in multi-path interferometer is characterized quantitatively. These complementarity relations are all complete for the two-path interferometer.

\end{abstract}
 \maketitle
\section{Introduction.}
Complementarity relation is a fundamental law in quantum mechanics. It has profound implications for both theoretical and experimental development of quantum physics. In the
interferometer, the complementarity is manifested by the wave-particle features of the particle. The wave feature is exhibited by the visibility of the interference fringe and the particle feature is exhibited by path information in the interference.
The first analysis of the wave-particle duality in a quantitative way is given by Wootters and Zurek \cite{zuizao}. Then the wave-particle duality is put into a delicate formula by Greenberger and Yasin as \cite{GY}
\begin{equation*}
P^{2}+V^{2}\leq 1,
\end{equation*}
where $P$ stands for the path information of the particle and $V$ stands for the visibility of the interference fringes.
The first quantitative wave-particle duality in multi-path interferometers was established by D\"{u}rr \cite{durr}. In his work, he introduced the criteria for wave measures and particle measures, and provided specific expressions through the moments of functions \cite{durr}. Since then, much attention has been paid to characterize the wave-particle duality  quantitatively \cite{n1,n2,n5,TQ}. Especially the emergence of quantum coherence resource theory promotes further the development of the complementarity relation in multi-path interference. Different types of wave-particle duality are proposed from different perspectives \cite{n6,GR,EK,hh1}. To explore the gap between the unity and the wave-particle features, the mixedness \cite{n3,n2}, entanglement \cite{JC1,JC2,JC3}, and entropy \cite{s1,s2,s3} are involved and the corresponding trialities emerges in the complementarity relation.
The complementarity relations in form of inequalities are called incomplete, while those in form of identities are called complete \cite{Basso20,Basso21}.

To detect the path information of the particle,
the interference affiliated with path detectors is introduced and a wave-particle duality relation is derived as \cite{Englert}
\begin{equation*}
D^{2}+V^{2}\leq 1,
\end{equation*}
where $D$, referred to as {\it path distinguishability}, quantifies the particle properties in the two-path interferometer, and $V$, defined in terms of $l_2$ norm of coherence, quantifies the visibility of the interference fringes.
Later, the duality between the path distinguishability and visibility is presented
in the  $n$-path interference with path detectors \cite{Bera}.
This duality is then improved by Ref. \cite{du1} to a tighter one as
\begin{eqnarray}\label{eq ori}
    (P_{s}-\frac{1}{n})^{2}+X^{2}\leq (1-\frac{1}{n})^{2},
\end{eqnarray}
where $P_{s}$ is the optimal success probability to discriminate detector states
and $X$ is the normalized $l_1$ norm coherence. Subsequently, this duality in Eq. (\ref{eq ori}) is improved further by  assisting the particle with a quantum memory \cite{bkf}
\begin{equation}\label{eq bu}
(P^A_{s}-\frac{1}{n})^{2}+X_A^{2}\leq
(1-\frac{1}{n})^{2} + \frac{2(n-1)}{n^{2}}({\rm Tr}{\rho}_{A}^2-{\rm Tr}{\rho}_{AB}^2),
\end{equation}
where subscript A represents the particle we concerned and subscript B represents the quantum memory.
Apparently, the participation of quantum memory improves the duality relation between the visibility and the path distinguishability.

In this work, we mainly study the complementarity relation in the multi-path interferometers with detectors and quantum memory thoroughly. Our aim is to analyze the impact of quantum memory on the complementarity relation in the multi-path interferometers. As a result, we find the entanglement between the referred particle and the quantum memory plays an essential role in the complementarity relation.

To this end, we first
show the duality relation between the visibility and the path distinguishability, which is complete in the two-path interferometers. As a byproduct, we derive the triality among the visibility, the path distinguishability and mixedness. Next we prove
the complementarity relation among the visibility, path distinguishability, and entanglement analytically, which is illustrated by two explicit examples. This triality relation reveals the essential role of entanglement in the constraint on wave and particle features of particle. Here we should emphasize that these  complementarity relations we derived here are all for the multi-path interferometers with detectors and quantum memory.

\section{Preliminaries}

In this section, we introduce $n$-path interferometer equipped with path detectors and quantum memory, as depicted in Figure \ref{fg1}.
\begin{figure}[htb]
\centering
    \includegraphics[width=\textwidth]{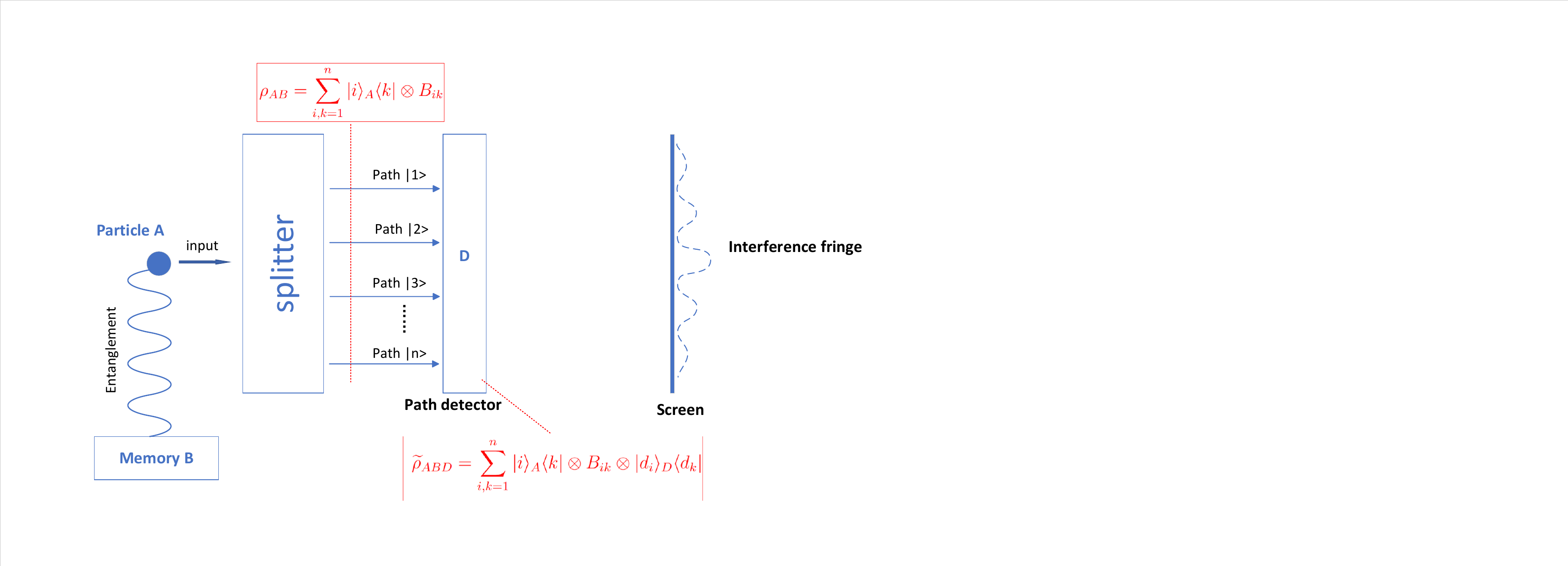}
    \caption{Schematic of $n$-path interferometer with path detector and quantum memory. The particle passes through an $n$-path interferometer while the particle is entangled with a quantum memory. The path detector is used to detect which path a particle passes through. Finally the particles hit the screen to form interference fringes.}
    \label{fg1}
\end{figure}
We suppose $\{|i\rangle_{A}\}_{i=1}^{n}$ is a set of orthonormal basis states of $n$-dimensional Hilbert space denoted by $H_{A}$ with $|i\rangle_{A}$ corresponding to the $i$-th path. Let $\{|e_j\rangle_{B}\}_{j=1}^{n}$ denote another orthonormal basis of the quantum memory, which is an $n$-dimensional Hilbert space denoted by $H_{B}$. Then
the bipartite state $\rho_{AB}$ shared between the particle A and the quantum memory B after passing through the beam splitter can be described as
\begin{equation*}
\rho_{AB}=\sum_{i,k=1}^{n}\sum_{j,l=1}^{n}\rho_{ij,kl}|i\rangle_{A}\langle k|\otimes|e_j\rangle_{B}\langle e_l|,
\end{equation*}
where $\sum_{i,j=1}^{n}\rho_{ij,ij}=1$.
The reduced state of subsystem $A$ is
\begin{equation}\label{before A}
\rho_{A}={\rm Tr}_{B}(\rho_{AB})=\sum_{i,k=1}^{n}\sum_{j=1}^{n}\rho_{ij,kj}|i\rangle_{A}\langle k|.
\end{equation}
Denote the operator
$B_{ik}=\sum_{j,l=1}^{n}\rho_{ij,kl}|e_j\rangle_{B}\langle e_l|$ with ${\rm Tr}(B_{ik})=(\rho_{A})_{ik}$, then the bipartite state $\rho_{AB}$ can also be written as
\begin{equation*}
\rho_{AB}=\sum_{i,k=1}^{n}|i\rangle_{A}\langle k|\otimes B_{ik}.
\end{equation*}
Within the interferometer, the particle A will interact with a system $D$ known as the detector. The detector starts in the state $|d_{0}\rangle$. At this moment, the initial state of the whole system is
\begin{eqnarray*}
{\rho}_{ABD}=\sum_{i,k=1}^{n}|i\rangle_{A}\langle k|\otimes B_{ik}\otimes|d_{0}\rangle_{D}\langle d_{0}|.
\end{eqnarray*}
After the interaction, the state of the whole system becomes
\begin{equation}
\widetilde{\rho}_{ABD}=\sum_{i,k=1}^{n}|i\rangle_{A}\langle k|\otimes B_{ik}\otimes|d_{i}\rangle_{D}\langle d_{k}|,
\end{equation}
where the interaction is described as the controlled unitary operator $U(|i\rangle_{A}\otimes |d_{0}\rangle_{D})=|i\rangle_{A}\otimes|d_{i}\rangle_{D}$.
Tracing out the detector, the density matrix of the composed system $AB$ is
\begin{align*}
\widetilde{\rho}_{AB}&={\rm Tr}_{D}(\widetilde{\rho}_{ABD})
=\sum_{i,k=1}^{n}\langle d_{k}|d_{i}\rangle|i\rangle_{A}\langle k|\otimes B_{ik},
\end{align*}
and the reduced density matrix of the particle A is
\begin{align}\label{y a}
\widetilde{\rho}_{A}&={\rm Tr}_{BD}(\widetilde{\rho}_{ABD})
=\sum_{i,k=1}^{n}\langle d_{k}|d_{i}\rangle (\rho_{A})_{ik}|i\rangle_{A}\langle k|.
\end{align}
Similarly, the density matrix of the detector can be represented as
\begin{equation}\label{eq d}
\widetilde{\rho}_{D}={\rm Tr}_{AB}(\widetilde{\rho}_{ABD})
=\sum_{i=1}^{n}q_{i}|d_{i}\rangle\langle d_{i}|,
\end{equation}
where
\begin{equation}\label{eq qi}
    q_{i}=(\rho_{A})_{ii}.
\end{equation}

In the multi-path interference, the particle A behaves like both wave and particle when it goes through the slits.
To characterize the wave behavior of the particle A in the interference, the $l_2$ norm of coherence is employed as the visibility \cite{durr}. For any quantum state $\rho=\sum_{i,j} \rho_{ij} |i\rangle\langle j|$, its $l_2$ norm of coherence is $C_{l_2}(\rho)=( \sum_{i\neq j} |\rho_{ij}|^2)^{\frac{1}{2}}$. Here we consider a normalized version of the $l_2$ norm of coherence given by $V(\rho)={\frac{\sqrt{2(n-1)}}{n}}C_{l_2}({\rho})$.
So the wave feature exhibited by the quantum state $\widetilde{\rho}_{A}$ before the screen in the interferometer is quantified as
\begin{equation}\label{coherence1}
V(\widetilde{\rho}_{A})^{2}
=\frac{2(n-1)}{n^{2}}\sum_{\substack{i,k=1 \\ i\neq k}}^{n}|\langle d_{k}|d_{i}\rangle|^{2} |(\rho_{A})_{ik}|^{2}.
\end{equation}

On the other aspect, the particle behavior of the particle A in the interference is described by the path information which is encoded within the detector states.
To acquire the path information of the particle A, we need to turn to the detectors and discriminate the detector states $\{|d_{i}\rangle_{D}\}$ with probabilities $\{q_{i}\}$ in Eq. (\ref{eq qi}). Here we utilize the minimum-error state discrimination as the detector states $\{|d_{i}\rangle_{D}\}$ are not necessarily orthogonal.
For a positive operator valued measure $\bf{(POVM)}$
$\Pi=\{\Pi_{i}\}_{i=1}^{n}$ with $\Pi_{i}\geq 0$ and $\sum_{i=1}^{n}\Pi_{i}=\mathbb{I}$, where $\mathbb{I}$ is identity operator,
the success probability to discriminate the states $\{|d_{i}\rangle_{D}\}_{i=1}^{n}$  with probabilities $\{q_{i}\}$ by $\Pi$ is $\sum_{i=1}^{n}q_{i} {\rm Tr}[\Pi_{i}|d_{i}\rangle_{D}\langle d_{i}|]$.
Thus the optimal success probability among all $\bf{POVM}$s is
\begin{equation}\label{hps}
P_{s}(\{q_i, |d_i\rangle_D\})=\max_{\{\Pi_{i}\}}\sum_{i=1}^{n}q_{i} {\rm Tr}[\Pi_{i}|d_{i}\rangle_{D}\langle d_{i}|].
\end{equation}
The upper bound for the
optimal success probability $P_{s}(\{q_i, |d_i\rangle_D\})$ is given by \cite{du1}
\begin{equation}\label{hp_s}
P_{s}(\{q_i, |d_i\rangle_D\})\leq \frac{1}{n}+\frac{1}{2n}\sum_{i,k=1}^{n}||T_{ik}||_{1},
\end{equation}
where the operator $T_{ik}=q_{i}|d_{i}\rangle_{D}\langle d_{i}|-q_{k}|d_{k}\rangle_{D}\langle d_{k}|$.  The $||\cdot||_{1}$ is trace norm and
\begin{equation}\label{hp_s1}
||T_{ik}||_{1}=2\sqrt{(\frac{q_{i}+q_{k}}{2})^{2}-q_{i}q_{k}|\langle d_{i}|d_{k}\rangle|^{2}}.
\end{equation}

The Ref. \cite{du1} indicates that $P_{s}(\{q_i, |d_i\rangle_D\})-\frac{1}{n}$ is an indicator of how much performance can be improved by using prior information and detectors compared to random guessing to acquire the path information. Here we regard it as the path distinguishability to describe the  particle behavior of particle A passing through the interferometer.

\section{ The complementarity relation between visibility and path distinguishability in a multi-path interferometer}

 In this section, we present the complementarity relation between visibility and path distinguishability within an
$n$-path interferometer equipped with a detector and quantum memory.

\begin{thm}\label{th dualtiy}
In the $n$-path interferometer equipped with a detector and quantum memory, the visibility and path distinguishability satisfy
\begin{equation}\label{th1}
(P_{s}(\{q_i, |d_i\rangle_D\})-\frac{1}{n})^{2}+V (\widetilde{\rho}_{A})^{2}\leq
(1-\frac{1}{n})^{2}-\frac{2(n-1)}{n^{2}}({\rm Tr}[\widetilde{\rho}_{D}^{2}]-{\rm Tr}[\widetilde{\rho}_{A}^{2}]).
\end{equation}
\end{thm}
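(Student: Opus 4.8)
The plan is to reduce Eq.~(\ref{th1}) to a single inequality for $P_{s}$ that only involves the detector overlaps $|\langle d_{i}|d_{k}\rangle|^{2}$ and the path weights $q_{i}=(\rho_{A})_{ii}$, the point being that the visibility and the $\widetilde{\rho}_{A}$--purity on the two sides are built to cancel. First I would rewrite every ingredient of Eq.~(\ref{th1}) through matrix elements. From Eq.~(\ref{y a}) the emerging particle purity is ${\rm Tr}[\widetilde{\rho}_{A}^{2}]=\sum_{i}q_{i}^{2}+\sum_{i\neq k}|\langle d_{k}|d_{i}\rangle|^{2}|(\rho_{A})_{ik}|^{2}$, so comparison with Eq.~(\ref{coherence1}) yields the exact identity $V(\widetilde{\rho}_{A})^{2}=\frac{2(n-1)}{n^{2}}\big({\rm Tr}[\widetilde{\rho}_{A}^{2}]-\sum_{i}q_{i}^{2}\big)$. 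From Eq.~(\ref{eq d}) the detector purity is ${\rm Tr}[\widetilde{\rho}_{D}^{2}]=\sum_{i,k}q_{i}q_{k}|\langle d_{i}|d_{k}\rangle|^{2}=\sum_{i}q_{i}^{2}+S$, where I abbreviate $S:=\sum_{i\neq k}q_{i}q_{k}|\langle d_{i}|d_{k}\rangle|^{2}\ge 0$. Substituting both identities into Eq.~(\ref{th1}) and cancelling the common ${\rm Tr}[\widetilde{\rho}_{A}^{2}]$ contribution, the theorem becomes equivalent to the purely detector-side estimate
\begin{equation*}
\Big(P_{s}(\{q_{i},|d_{i}\rangle_{D}\})-\tfrac{1}{n}\Big)^{2}\le\Big(1-\tfrac{1}{n}\Big)^{2}-\frac{2(n-1)}{n^{2}}S .
\end{equation*}

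For this reduced inequality I would start from Eqs.~(\ref{hp_s})--(\ref{hp_s1}), which (the $i=k$ terms vanishing) give $P_{s}-\frac{1}{n}\le\frac{1}{n}\sum_{i\neq k}\sqrt{\frac{(q_{i}+q_{k})^{2}}{4}-q_{i}q_{k}|\langle d_{i}|d_{k}\rangle|^{2}}$. Squaring this by an unweighted Cauchy--Schwarz turns out to be far too lossy; the essential idea is to apply Cauchy--Schwarz with the weights $q_{i}+q_{k}$,
\begin{equation*}
\Big(\sum_{i\neq k}\sqrt{\tfrac{(q_{i}+q_{k})^{2}}{4}-q_{i}q_{k}|\langle d_{i}|d_{k}\rangle|^{2}}\Big)^{2}
\le\Big(\sum_{i\neq k}(q_{i}+q_{k})\Big)\sum_{i\neq k}\frac{1}{q_{i}+q_{k}}\Big(\tfrac{(q_{i}+q_{k})^{2}}{4}-q_{i}q_{k}|\langle d_{i}|d_{k}\rangle|^{2}\Big),
\end{equation*}
where pairs with $q_{i}+q_{k}=0$ are omitted since they contribute nothing on the left. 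Now $\sum_{i\neq k}(q_{i}+q_{k})=2(n-1)$ because each $q_{i}$ occurs $n-1$ times and $\sum_{i}q_{i}=1$, while the second factor simplifies to $\frac{n-1}{2}-W$ with $W:=\sum_{i\neq k}\frac{q_{i}q_{k}|\langle d_{i}|d_{k}\rangle|^{2}}{q_{i}+q_{k}}$. Hence $(P_{s}-\frac{1}{n})^{2}\le\frac{1}{n^{2}}\big[(n-1)^{2}-2(n-1)W\big]$.

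Finally I would invoke $\sum_{j}q_{j}=1$ with $q_{j}\ge0$, which forces $q_{i}+q_{k}\le1$ for every pair, so that $\frac{1}{q_{i}+q_{k}}\ge1$ and therefore $W\ge S$. This converts the previous line into exactly the reduced inequality, and undoing the substitutions of the first paragraph reproduces the right-hand side of Eq.~(\ref{th1}). I expect the main obstacle to be precisely the choice of weights in the Cauchy--Schwarz step: the weights $q_{i}+q_{k}$ are what simultaneously produce the clean normalization $2(n-1)$ and leave the residual term in the form $\frac{q_{i}q_{k}|\langle d_{i}|d_{k}\rangle|^{2}}{q_{i}+q_{k}}$, which can then be compared with $S$ through the elementary bound $q_{i}+q_{k}\le1$; any less adapted weighting breaks one of these two matches. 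As a consistency check I would verify that all inequalities saturate when the detector states are mutually orthogonal or when a single $q_{i}$ equals $1$, and that for $n=2$ one has $q_{1}+q_{2}=1$ so that $W=S$ and the lone Cauchy--Schwarz term is an equality, recovering the completeness of Eq.~(\ref{th1}) in the two-path case.
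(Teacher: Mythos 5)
Your proposal is correct and takes essentially the same route as the paper's own proof: both start from the bound in Eqs.~(\ref{hp_s})--(\ref{hp_s1}), apply the Cauchy--Schwarz inequality with weights proportional to $q_i+q_k$ (yielding $\sum_{i\neq k}(q_i+q_k)=2(n-1)$ and the residual term $W$), and then use $q_i+q_k\le 1$ to compare $W$ with $S=\sum_{i\neq k}q_iq_k|\langle d_i|d_k\rangle|^2$. Your preliminary step of cancelling $V(\widetilde{\rho}_A)^2$ against the $\mathrm{Tr}[\widetilde{\rho}_A^2]$ contribution to isolate a detector-only inequality is merely a reorganization of the paper's final substitution, not a different argument.
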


\begin{proof}
In the $n$-path interferometer equipped with a detector and quantum memory, by  Eq. (\ref{hp_s}) and Eq. (\ref{hp_s1}), we get that the path distinguishability satisfies
\begin{align*}
(P_{s}(\{q_i, |d_i\rangle_D\})-\frac{1}{n})^{2}
&\leq\frac{1}{n^{2}}\Bigg(\sum_{\substack{i,k=1 \\ i\neq k}}^{n}\sqrt{(\frac{q_{i}+q_{k}}{2})^{2}-q_{i}q_{k}|\langle d_{i}|d_{k}\rangle|^{2}}\Bigg)^{2}\nonumber\\
&\leq \frac{1}{n^{2}}\Bigg(\sum_{\substack{i,k=1 \\ i\neq k}}^{n}\frac{q_{i}+q_{k}}{2}\Bigg)
\Bigg[\sum_{\substack{i,k=1 \\ i\neq k}}^{n} (\frac{q_{i}+q_{k}}{2}-\frac{2q_{i}q_{k}}{q_{i}+q_{k}}|\langle d_{i}|d_{k}\rangle|^{2})\Bigg],
\end{align*}
where we have used the Cauchy-Schwarz inequality for the last inequality.
Note the relation $\sum_{i\neq k=1}^{n}\frac{q_{i}+q_{k}}{2}=n-1$, the inequality above is then bounded by
\begin{align}\label{cs1}
(P_{s}(\{q_i, |d_i\rangle_D\})-\frac{1}{n})^{2}
&\leq \frac{(n-1)}{n^{2}}\Bigg[(n-1)-\sum_{\substack{i,k=1 \\ i\neq k}}^{n} \frac{2q_{i}q_{k}}{q_{i}+q_{k}}|\langle d_{i}|d_{k}\rangle|^{2}\Bigg]\nonumber\\
&\leq\frac{(n-1)}{n^{2}}\Bigg[(n-1)-2\sum_{\substack{i,k=1 \\ i\neq k}}^{n}q_{i}q_{k}|\langle d_{i}|d_{k}\rangle|^{2}\Bigg],
\end{align}
due to $q_{i}+q_{k}\leq 1$.
Therefore the sum of the path distinguishability and the visibility is limited by
\begin{align*}\label{wp1}
&(P_{s}(\{q_i, |d_i\rangle_D\})-\frac{1}{n})^{2}+V(\widetilde{\rho}_{A})^{2}\nonumber\\
&\leq\frac{(n-1)}{n^{2}}\Bigg((n-1)-2\sum_{\substack{i,k=1 \\ i\neq k}}^{n}q_{i}q_{k}|\langle d_{i}|d_{k}\rangle|^{2}\Bigg)+\frac{2(n-1)}{n^{2}}\Bigg(\sum_{\substack{i,k=1 \\ i\neq k}}^{n}|\langle d_{i}|d_{k}\rangle|^{2}|(\rho_{A})_{ik}|^{2}\Bigg)\nonumber\\
&=(1-\frac{1}{n})^{2}-\frac{2(n-1)}{n^{2}}({\rm Tr}[\widetilde{\rho}_{D}^{2}]-{\rm Tr}[\widetilde{\rho}_{A}^{2}]),
\end{align*}
which deduces the visibility and path distinguishability duality in Eq. (\ref{th1}).
\end{proof}

Theorem \ref{th dualtiy} shows the visibility and path distinguishability duality in the $n$-path interferometer equipped with a detector and quantum memory. For the $n$-path interferometer equipped with a detector but without quantum memory, if the initial state of the particle is a pure state, analogously to the proof of Theorem \ref{th dualtiy}, one can derive that the visibility and path distinguishability satisfy
\begin{eqnarray}\label{no memory}
(P_{s}(\{q_i, |d_i\rangle_D\})-\frac{1}{n})^{2}+V (\widetilde{\rho}_{A})^{2}\leq
(1-\frac{1}{n})^{2}.
\end{eqnarray}
Especially for the two-path case, Eq. (\ref{no memory}) turns into the
 duality relation presented in \cite{Englert}. By comparing Eq. (\ref{th1}) and Eq. (\ref{no memory}), it is easy to find the $n$-path interferometer with quantum memory indeed improves the duality between the visibility and path distinguishability.

Now we make a comparison between the visibility and path distinguishability dualities in Theorem \ref{th dualtiy} and Eq. (\ref{eq bu}) proved in Ref. \cite{bkf}, which share some similarities.
First, the duality in Eq. (\ref{eq bu}) refers to pure state as the input state and ours in Eq. (\ref{th1}) is for mixed state $\rho_{AB}$.
Although all mixed states can be purified theoretically, the purification of an unknown mixed state is in fact terribly hard. In view of the fact that the pure state is rather fragile and it is unavoidably affected by the environment and gets mixed.
So  the duality for mixed states is more general and is meaningful.
Second, it is straightforward to verify that for pure input states, the upper bound in Eq. (12) is consistent with that in Eq. (2), as the relation $\rm{Tr}[\tilde{\rho}_{AB}^{2}] = \rm{Tr}[\tilde{\rho}_{D}^{2}]$ holds. But this is not true for mixed input states. So the upper bound at the right hand side of Eq. (12) and Eq. (2) are different. Third, the $l_{2}$-norm coherence  is originally the second moment of the interference pattern \cite{durr} and it characterizes the  visibility by its own way compared with other measures.
    Finally,  the duality in Theorem \ref{th dualtiy} reduces to  Eq. (\ref{eq bu}) for pure input state in two-path interferometer. But in multi-path interferometer, our result shows its advantages over Eq. (\ref{eq bu}) even for some pure states. For instance,  consider the pure state $|\psi\rangle_{AB}=\sum_{i=1}^{3}\sqrt{p_{i}}|i\rangle_{A}|u_{i}\rangle_{B}$ as the input state in three-path interferometer. Here, $p_{i}=\sum_{j=1}^{3}|a_{ij}|^{2}$ with $\sum_{i=1}^{3}p_{i}=1$ and the state $|u_{i}\rangle_{B}$ is defined as $|u_{i}\rangle_{B}=\sum_{j=1}^{3}a_{ij}|e_j\rangle_{B}/\sqrt{p_{i}}$. If we take the coefficients $a_{11}=\sqrt{\frac{p}{{3}}}, a_{21}=\sqrt{\frac{q}{{3}}}, a_{33}=\sqrt{\frac{3-p-q}{{3}}}$ and the rest are zero, where $0\leq p, q\leq 1$. Then
\begin{equation}
    |\psi\rangle_{AB}=\sqrt{\frac{p}{{3}}}|1\rangle |e_1\rangle+\sqrt{\frac{q}{{3}}}|2\rangle|e_1\rangle +\sqrt{\frac{3-p-q}{{3}}}|3\rangle |e_3\rangle.
\end{equation}
We specify the detector states $\{|d_{i}\rangle\}$ satisfy
$\langle d_{2}|d_{1}\rangle=\frac{1}{3}$, then we can get the reduced density matrix of particle A after the slits is
\[
\begin{gathered}
\mathbf{\widetilde{\rho}}_{A}=
\begin{pmatrix}
\frac{p}{3} & \frac{\sqrt{pq}}{9} & 0 \\
\frac{\sqrt{pq}}{9} & \frac{q}{3} & 0 \\
0 & 0 & \frac{3-p-q}{3} \\
\end{pmatrix}
\end{gathered}.
\]
By calculation we find that the visibility defined in Eq. (\ref{eq bu}) is $X_{A}^{2}=\frac{4pq}{3^{6}}$, while the visibility in Eq. (\ref{th1}) is $V (\widetilde{\rho}_{A})^{2}=\frac{8pq}{3^{6}}$. Taking into account that the path distinguishabilities in the left hand side and the upper bound in the right hand side of the dualities in Eq. (\ref{eq bu}) and Eq. (\ref{th1}) are the same, but the visibility $V (\widetilde{\rho}_{A})$ in Eq. (\ref{th1}) is strictly greater than that $X_{A}$ in Eq. (\ref{eq bu}),
so the duality  (\ref{th1}) is tighter than that (\ref{eq bu}) for this case.

Especially, when $n=2$,  the optimal success probability $P_{s}(\{q_i, |d_i\rangle_D\})$ in Eq. (\ref{hp_s}) reaches its upper bound \cite{HC}. In this case, the duality in  Eq. (\ref{th1}) is complete.

\begin{cor}\label{cor n2}
In the two-path interferometer equipped with a detector and quantum memory, the visibility and path distinguishability duality is
\begin{equation}\label{eq n2}
(P_{s}(\{q_i, |d_i\rangle_D\})-\frac{1}{2})^{2}+V(\widetilde{\rho}_{A})^{2}=
\frac{1}{4}-\frac{1}{2}({\rm Tr}(\widetilde{\rho}^2_{D})-{\rm Tr}(\widetilde{\rho}^2_{A})).
\end{equation}
\end{cor}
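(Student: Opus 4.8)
The plan is to return to the proof of Theorem \ref{th dualtiy}, specialize it to $n=2$, and show that each of the three inequalities employed there is saturated, so that the bound in Eq. (\ref{th1}) collapses to the identity in Eq. (\ref{eq n2}). The estimates used in that proof are: the upper bound Eq. (\ref{hp_s}) on the optimal discrimination probability $P_s$; the Cauchy--Schwarz step passing to the second line of (\ref{cs1}); and the elementary bound $q_i+q_k\le 1$. I would verify that all three become equalities when $n=2$.

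First, for discriminating only two states the minimum-error problem is solved exactly by the Helstrom measurement, so $P_s$ attains the bound Eq. (\ref{hp_s}); this is the content of the cited reference \cite{HC}. Indeed, at $n=2$ the right-hand side of Eq. (\ref{hp_s}) is $\frac{1}{2}+\frac{1}{4}\left(\|T_{12}\|_1+\|T_{21}\|_1\right)=\frac{1}{2}+\frac{1}{2}\|T_{12}\|_1$, using $T_{21}=-T_{12}$, which is precisely the Helstrom success probability for the ensemble $\{q_i,|d_i\rangle\}$. Second, writing each summand in (\ref{cs1}) as $\sqrt{\frac{q_i+q_k}{2}}\cdot\sqrt{\frac{q_i+q_k}{2}-\frac{2q_iq_k}{q_i+q_k}|\langle d_i|d_k\rangle|^2}$, the Cauchy--Schwarz inequality is saturated exactly when the two sequences are proportional over the index set $\{(i,k):i\ne k\}$. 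For $n=2$ this set contains only the symmetric pair $(1,2),(2,1)$, on which both sequences take identical values, so proportionality is automatic and the step is an equality. Third, the bound $q_i+q_k\le 1$ is in fact an equality at $n=2$, since $q_1+q_2={\rm Tr}(\rho_A)=1$; hence $\frac{2q_iq_k}{q_i+q_k}|\langle d_i|d_k\rangle|^2=2q_iq_k|\langle d_i|d_k\rangle|^2$ for each term.

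With all three inequalities saturated, the chain of estimates in the proof of Theorem \ref{th dualtiy} becomes a chain of equalities, yielding precisely Eq. (\ref{eq n2}). The only substantive ingredient is the saturation of the discrimination bound Eq. (\ref{hp_s}) via the Helstrom measurement, and this is exactly where I expect the main effort to lie; the remaining two equalities follow immediately from the $n=2$ index structure and the normalization $q_1+q_2=1$.
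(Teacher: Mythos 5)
Your proposal is correct and takes essentially the same approach as the paper: the paper justifies Corollary \ref{cor n2} by the single observation that at $n=2$ the Helstrom measurement attains the bound in Eq. (\ref{hp_s}) (citing \cite{HC}), leaving the saturation of the remaining steps implicit. Your explicit checks that the Cauchy--Schwarz step (two identical summands over the index pairs $(1,2)$ and $(2,1)$) and the bound $q_i+q_k\le 1$ (since $q_1+q_2=1$) also become equalities at $n=2$ simply fill in the details the paper omits.
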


Furthermore, for any quantum state $\rho$, if we choose the normalized mixedness as $M(\rho)=\frac{2(n-1)}{n^{2}}(1- {\rm Tr} \rho^2)$,  then by moving the purity ${\rm Tr}(\widetilde{\rho}^2_{A})$ from the right hand side to the left hand side in Eq. (\ref{th1}), the duality between the visibility and path distinguishability is transformed into the triality among the visibility, path distinguishability and mixedness.
\begin{thm}
In the $n$-path interferometer equipped with a detector and quantum memory, the visibility, path distinguishability and mixedness satisfy
\begin{equation}
(P_{s}(\{q_i, |d_i\rangle_D\})-\frac{1}{n})^{2}+V(\widetilde{\rho}_{A})^{2}+M(\widetilde{\rho}_{A})\leq (1-\frac{1}{n})^{2}+M(\widetilde{\rho}_{D}).
\end{equation}
\end{thm}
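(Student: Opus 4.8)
The plan is to obtain this triality as a purely algebraic rewriting of the duality in Eq.~(\ref{th1}), so that no new inequality needs to be established. First I would recall the definition of the normalized mixedness $M(\rho)=\frac{2(n-1)}{n^{2}}(1-{\rm Tr}\,\rho^{2})$ and rearrange it into the equivalent form $\frac{2(n-1)}{n^{2}}{\rm Tr}\,\rho^{2}=\frac{2(n-1)}{n^{2}}-M(\rho)$. Applying this identity to both $\widetilde{\rho}_{A}$ and $\widetilde{\rho}_{D}$ converts the purity difference appearing on the right-hand side of Eq.~(\ref{th1}) into a difference of mixedness terms.

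Concretely, the second step is to substitute these two identities into the term $-\frac{2(n-1)}{n^{2}}({\rm Tr}[\widetilde{\rho}_{D}^{2}]-{\rm Tr}[\widetilde{\rho}_{A}^{2}])$ on the right-hand side of Eq.~(\ref{th1}). This yields exactly $M(\widetilde{\rho}_{D})-M(\widetilde{\rho}_{A})$, because the two additive constants $\frac{2(n-1)}{n^{2}}$ contributed by the $A$ and $D$ terms cancel. Hence Eq.~(\ref{th1}) becomes $(P_{s}(\{q_i, |d_i\rangle_D\})-\frac{1}{n})^{2}+V(\widetilde{\rho}_{A})^{2}\le(1-\frac{1}{n})^{2}+M(\widetilde{\rho}_{D})-M(\widetilde{\rho}_{A})$.

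Finally, moving the single term $M(\widetilde{\rho}_{A})$ from the right-hand side to the left-hand side produces precisely the claimed triality. The argument is entirely self-contained given Theorem~\ref{th dualtiy}, so there is no genuine obstacle here; the only point requiring care is the bookkeeping of the two identical constants $\frac{2(n-1)}{n^{2}}$, whose exact cancellation is what makes the rewriting valid. I would verify this cancellation explicitly, since it confirms that the prefactor chosen in the definition of $M$ is exactly the one that makes the normalization consistent across the $A$ and $D$ subsystems and renders the transformation an identity rather than an approximation.
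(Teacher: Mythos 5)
Your proposal is correct and is exactly the paper's argument: the paper obtains this triality by rewriting the purity difference in Theorem~\ref{th dualtiy} via $\frac{2(n-1)}{n^{2}}{\rm Tr}\,\rho^{2}=\frac{2(n-1)}{n^{2}}-M(\rho)$ and moving $M(\widetilde{\rho}_{A})$ to the left-hand side, with the two constants cancelling just as you verified. No gap; the result is a pure algebraic consequence of the duality relation.
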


In particular, when $n=2$, one can derive a complete triality relation about the visibility, path distinguishability and mixedness.

\begin{cor}
In the two-path interferometer equipped with a detector and quantum memory,
if the detector states satisfy $|\langle d_{1}|d_{2}\rangle|^{2}=\frac{1}{2}$,  then
the visibility, path distinguishability and mixedness satisfy the complete triality relation as
\begin{equation}\label{eq tri m}
(P_{s}(\{q_i, |d_i\rangle_D\})-\frac{1}{2})^{2}+\frac{1}{2}V(\widetilde{\rho}_{A})^{2}+\frac{1}{2}M(\widetilde{\rho}_{A})= \frac{1}{4}.
\end{equation}
\end{cor}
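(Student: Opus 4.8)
The plan is to deduce the complete triality directly from the complete two-path duality in Corollary~\ref{cor n2}, treating the mixedness term as a mere rearrangement of the purity gap on its right-hand side. For $n=2$ every inequality used in the proof of Theorem~\ref{th dualtiy} collapses to an equality---the bound in Eq.~(\ref{hp_s}) is saturated, the Cauchy--Schwarz step involves two identical summands, and $q_1+q_2=1$---so Corollary~\ref{cor n2} holds exactly as stated. Writing $V^2=V(\widetilde{\rho}_A)^2$ and using the $n=2$ mixedness $M(\widetilde{\rho}_A)=\tfrac{1}{2}(1-{\rm Tr}\,\widetilde{\rho}_A^2)$, substituting Corollary~\ref{cor n2} into the left side of Eq.~(\ref{eq tri m}) reduces the entire claim to the single scalar identity
\begin{equation*}
M(\widetilde{\rho}_{A})=V^2+{\rm Tr}(\widetilde{\rho}^2_{D})-{\rm Tr}(\widetilde{\rho}^2_{A}),
\end{equation*}
so the task is to verify this identity under the hypothesis $|\langle d_{1}|d_{2}\rangle|^{2}=\tfrac{1}{2}$.

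The next step is to evaluate the three traces for $n=2$ with $q_1+q_2=1$. From $\widetilde{\rho}_D=q_1|d_1\rangle\langle d_1|+q_2|d_2\rangle\langle d_2|$ I would obtain ${\rm Tr}(\widetilde{\rho}_D^2)=q_1^2+q_2^2+2q_1q_2|\langle d_1|d_2\rangle|^2=1-q_1q_2$ after imposing the hypothesis. From the explicit $\widetilde{\rho}_A$ in Eq.~(\ref{y a}), whose off-diagonal entries carry the overlap factor $\langle d_k|d_i\rangle$, I would get ${\rm Tr}(\widetilde{\rho}_A^2)=q_1^2+q_2^2+2|\langle d_1|d_2\rangle|^2|(\rho_A)_{12}|^2=1-2q_1q_2+|(\rho_A)_{12}|^2$. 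Finally Eq.~(\ref{coherence1}) gives $V^2=\tfrac{1}{2}\cdot 2|\langle d_1|d_2\rangle|^2|(\rho_A)_{12}|^2=\tfrac{1}{2}|(\rho_A)_{12}|^2$ once $|\langle d_1|d_2\rangle|^2=\tfrac{1}{2}$ is inserted.

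Substituting these into the reduced identity then becomes pure bookkeeping: the purity gap is ${\rm Tr}(\widetilde{\rho}_D^2)-{\rm Tr}(\widetilde{\rho}_A^2)=q_1q_2-|(\rho_A)_{12}|^2$, so the right side equals $\tfrac{1}{2}|(\rho_A)_{12}|^2+q_1q_2-|(\rho_A)_{12}|^2=q_1q_2-\tfrac{1}{2}|(\rho_A)_{12}|^2$, which matches $M(\widetilde{\rho}_A)=\tfrac{1}{2}(2q_1q_2-|(\rho_A)_{12}|^2)=q_1q_2-\tfrac{1}{2}|(\rho_A)_{12}|^2$. This establishes the reduced identity and hence Eq.~(\ref{eq tri m}).

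I do not anticipate a genuine obstacle, since the statement is an exact equality and the remaining work is algebraic. The one point demanding care is that the hypothesis $|\langle d_{1}|d_{2}\rangle|^{2}=\tfrac{1}{2}$ plays a double role---it both pins ${\rm Tr}(\widetilde{\rho}_D^2)=1-q_1q_2$ and halves the coherence factor in $V^2$---and it is exactly the resulting cancellation of the $|(\rho_A)_{12}|^2$ and $q_1q_2$ contributions that lets the mixedness absorb the purity gap ${\rm Tr}(\widetilde{\rho}_D^2)-{\rm Tr}(\widetilde{\rho}_A^2)$. Tracking the numerical prefactors through this double role is where an error would most likely creep in.
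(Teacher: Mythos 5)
Your proof is correct and in substance identical to the paper's: the paper likewise rests on the two-path Helstrom saturation and the same explicit formulas $(P_{s}-\tfrac{1}{2})^{2}=\tfrac{1}{4}-\tfrac{1}{2}q_{1}q_{2}$, $V(\widetilde{\rho}_{A})^{2}=\tfrac{1}{2}|(\rho_{A})_{12}|^{2}$, and $M(\widetilde{\rho}_{A})=\tfrac{1}{2}(1-|(\rho_{A})_{12}|^{2}-q_{1}^{2}-q_{2}^{2})$, then verifies the same cancellation using $q_{1}+q_{2}=1$. Your only deviation is organizational---you route through Corollary \ref{cor n2} and reduce the claim to the trace identity $M(\widetilde{\rho}_{A})=V(\widetilde{\rho}_{A})^{2}+{\rm Tr}(\widetilde{\rho}_{D}^{2})-{\rm Tr}(\widetilde{\rho}_{A}^{2})$, whereas the paper sums the three explicit terms directly, but since Corollary \ref{cor n2} is itself just the Helstrom formula, the ingredients coincide.
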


\begin{proof}
In the two-path interferometer equipped with a detector and quantum memory, if the detector states satisfy $|\langle d_{1}|d_{2}\rangle|^{2}=\frac{1}{2}$, then the optimal success probability reduces to
\begin{equation*}
(P_{s}(\{q_i, |d_i\rangle_D\})-\frac{1}{2})^{2}=\frac{1}{4}-q_{1}q_{2}|\langle d_{1}|d_{2}\rangle|^{2}=\frac{1}{4}-\frac{1}{2}q_{1}q_{2}.
\end{equation*}
Along with the expressions of the visibility
\begin{equation*}
    V(\widetilde{\rho}_{A})^{2}=\frac{1}{2}|(\rho_{A})_{12}|^{2}
\end{equation*}
and the mixedness of
\begin{equation*}
    M(\widetilde{\rho}_{A})=\frac{1}{2}(1-|(\rho_{A})_{12}|^{2}-\sum_{i=1}^{2}q_{i}^{2}),
\end{equation*}
it is obvious to derive the complete triality in Eq. (\ref{eq tri m}).
\end{proof}

The duality in Eq. (\ref{eq n2}) and triality in Eq. (\ref{eq tri m}) we obtained above are both complete for the two-path interferometer. These impose the constraints on the visibility, path distinguishability and mixedness quantitatively. For example, Eq. (\ref{eq tri m}) tells us that the sum of the visibility and path distinguishability increases as the mixedness of $\widetilde{\rho}_{A}$ decreases.
In contrast, if the output state $\widetilde{\rho}_{A}$ gets more mixed, then the sum of the visibility and path distinguishability gets less and the wave-particle feature of particle A is transformed partly into its mixedness.

\section{The triality relation among the visibility, path distinguishability, and entanglement in a multi-path interferometer}

In this section we explore the role of entanglement in the $n$-path interferometer and establish the triality relation among
the visibility, path distinguishability, and entanglement. Here we utilize the concurrence as the entanglement measure \cite{en}. For any pure state $|\psi\rangle_{AB}=\sum_{i,j} \psi_{ij}|ij\rangle_{AB}$, its concurrence is defined as
$C(|\psi\rangle_{AB})=\sqrt{2(1-{\rm Tr} \rho_A^2)}$,
with $\rho_A={\rm Tr}_B (|\psi\rangle_{AB}\langle\psi|)$. For any mixed state $\rho_{AB}$, its concurrence is defined with the convex roof construction $C(\rho_{AB})=\min_{\{p_i, |\psi_i\rangle_{AB}\}} \sum_i p_i C(|\psi_i\rangle_{AB})$ with the minimum running over all possible pure-state decompositions $\rho_{AB}=\sum_i p_i |\psi_i\rangle_{AB}\langle\psi_i|$. Here we use a normalized version of concurrence as $E(\rho_{AB})=\frac{\sqrt{n-1}}{n} C(\rho_{AB})$. 

\begin{thm}\label{eq vde}
In the $n$-path interferometer equipped with a detector and quantum memory,
the visibility, path distinguishability and entanglement satisfy
\begin{equation}\label{eq vde1}
(P_{s}(\{q_i, |d_i\rangle_D\})-\frac{1}{n})^{2}+V(\widetilde{\rho}_{A})^{2}+ E (\rho_{AB})^{2}\leq \frac{n^{2}-1}{n^{2}}-\frac{2(n-1)}{n^{2}} {\rm Tr}(\widetilde{\rho}^2_{AB}).
\end{equation}
\end{thm}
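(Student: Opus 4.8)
The plan is to bootstrap from the visibility–distinguishability duality already established in Theorem~\ref{th dualtiy} by adjoining a suitable upper bound on the entanglement term. Since $E(\rho_{AB})^2=\frac{n-1}{n^2}C(\rho_{AB})^2$, the first task is to control the squared concurrence of the \emph{input} state by the purity of its marginal $\rho_A$ from Eq.~(\ref{before A}). I would argue directly from the convex-roof definition: for any pure decomposition $\rho_{AB}=\sum_i p_i|\psi_i\rangle_{AB}\langle\psi_i|$ with marginals $\rho_A^{(i)}$, concavity of the square root (Jensen) gives $\sum_i p_i C(|\psi_i\rangle_{AB})\le\sqrt{2\sum_i p_i(1-\mathrm{Tr}(\rho_A^{(i)})^2)}$, and concavity of the linear entropy $1-\mathrm{Tr}(\cdot)^2$ together with $\rho_A=\sum_i p_i\rho_A^{(i)}$ yields $\sum_i p_i(1-\mathrm{Tr}(\rho_A^{(i)})^2)\le 1-\mathrm{Tr}\rho_A^2$. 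As this holds for every decomposition, the minimizing one obeys it too, so $C(\rho_{AB})^2\le 2(1-\mathrm{Tr}\rho_A^2)$ and hence $E(\rho_{AB})^2\le\frac{2(n-1)}{n^2}(1-\mathrm{Tr}\rho_A^2)$.

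Next I would add this estimate to the right-hand side of Theorem~\ref{th dualtiy}. Writing $\frac{n^2-1}{n^2}=(1-\frac1n)^2+\frac{2(n-1)}{n^2}$ and cancelling the common positive factor $\frac{2(n-1)}{n^2}$, the target inequality~(\ref{eq vde1}) reduces to the single purity estimate
$$\mathrm{Tr}[\widetilde{\rho}_{AB}^2]+\mathrm{Tr}[\widetilde{\rho}_A^2]\le\mathrm{Tr}[\rho_A^2]+\mathrm{Tr}[\widetilde{\rho}_D^2].$$
Here I would insert the explicit expansions: using Eqs.~(\ref{y a}), (\ref{eq d}) and $\widetilde{\rho}_{AB}=\sum_{i,k}\langle d_k|d_i\rangle|i\rangle_A\langle k|\otimes B_{ik}$, one finds $\mathrm{Tr}[\widetilde{\rho}_{AB}^2]=\sum_{i,k}|\langle d_i|d_k\rangle|^2\mathrm{Tr}(B_{ik}B_{ki})$, $\mathrm{Tr}[\widetilde{\rho}_A^2]=\sum_{i,k}|\langle d_i|d_k\rangle|^2|(\rho_A)_{ik}|^2$, $\mathrm{Tr}[\rho_A^2]=\sum_{i,k}|(\rho_A)_{ik}|^2$, and $\mathrm{Tr}[\widetilde{\rho}_D^2]=\sum_{i,k}q_iq_k|\langle d_i|d_k\rangle|^2$. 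Regrouping, the estimate becomes $\sum_{i,k}|\langle d_i|d_k\rangle|^2(\mathrm{Tr}(B_{ik}B_{ki})-q_iq_k)\le\sum_{i,k}(1-|\langle d_i|d_k\rangle|^2)|(\rho_A)_{ik}|^2$, whose right-hand side is manifestly nonnegative since $|\langle d_i|d_k\rangle|\le 1$.

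It therefore suffices to establish the key lemma $\mathrm{Tr}(B_{ik}B_{ki})\le q_iq_k$ for all $i,k$, which renders the left-hand side nonpositive term by term. I would derive it from positivity of $\rho_{AB}$: the principal block $\left(\begin{smallmatrix}B_{ii}&B_{ik}\\ B_{ki}&B_{kk}\end{smallmatrix}\right)$, being the compression of $\rho_{AB}$ onto $\mathrm{span}\{|i\rangle_A,|k\rangle_A\}\otimes H_B$, is positive semidefinite, so $B_{ik}=B_{ii}^{1/2}KB_{kk}^{1/2}$ for some contraction $K$ with $\|K\|\le 1$; then $\mathrm{Tr}(B_{ik}B_{ki})=\mathrm{Tr}(B_{ii}KB_{kk}K^\dagger)\le\mathrm{Tr}(B_{ii})\,\lambda_{\max}(B_{kk})\le\mathrm{Tr}(B_{ii})\mathrm{Tr}(B_{kk})=q_iq_k$, the diagonal case being the elementary $\mathrm{Tr}(B_{ii}^2)\le(\mathrm{Tr}B_{ii})^2$. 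I expect this lemma—the block positivity combined with the contraction bound—to be the main obstacle, as everything else is either a direct appeal to Theorem~\ref{th dualtiy} or bookkeeping with the explicit purities; the concavity argument for the concurrence bound is standard but must be phrased so that it passes to the optimal (minimizing) decomposition.
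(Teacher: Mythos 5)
Your proof is correct, and it takes a genuinely different route from the paper's. The paper proves the pure-state case directly: for $|\psi\rangle_{AB}=\sum_i\sqrt{p_i}\,|i\rangle_A|u_i\rangle_B$ it expands the path distinguishability, visibility, entanglement, and ${\rm Tr}(\widetilde{\rho}_{AB}^2)$ explicitly in terms of $p_i$, $|\langle d_j|d_i\rangle|^2$ and $|\langle u_j|u_i\rangle|^2$, derives Eq.~(\ref{eq vde1}) for pure inputs from the bound (\ref{cs1}) together with $|\langle d_j|d_i\rangle|\le 1$, and then passes to mixed $\rho_{AB}$ via the convex-roof decomposition optimal for the concurrence, invoking convexity of $(P_s-\frac1n)^2$, of $V$, and of the purity. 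You instead bootstrap from Theorem~\ref{th dualtiy} (which the paper indeed establishes for general mixed inputs, so the appeal is legitimate), adjoin the standard estimate $C(\rho_{AB})^2\le 2(1-{\rm Tr}\rho_A^2)$ --- your Jensen-plus-concavity derivation of it is sound, and only an upper bound over decompositions is needed --- and thereby reduce the whole theorem to the purity-exchange inequality ${\rm Tr}(\widetilde{\rho}_{AB}^2)+{\rm Tr}(\widetilde{\rho}_A^2)\le{\rm Tr}(\rho_A^2)+{\rm Tr}(\widetilde{\rho}_D^2)$, settled termwise by the block lemma ${\rm Tr}(B_{ik}B_{ki})\le q_iq_k$. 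I checked the details: the four purity expansions are right; $B_{ki}=B_{ik}^{\dagger}$ by Hermiticity of $\rho_{AB}$; the $2\times2$ block compression of $\rho_{AB}$ onto ${\rm span}\{|i\rangle_A,|k\rangle_A\}\otimes H_B$ is positive semidefinite, so the contraction factorization gives ${\rm Tr}(B_{ii}KB_{kk}K^{\dagger})\le{\rm Tr}(B_{ii})\,\lambda_{\max}(B_{kk})\le q_iq_k$ (one can even shortcut via $|\langle u|B_{ik}|v\rangle|^2\le\langle u|B_{ii}|u\rangle\langle v|B_{kk}|v\rangle$ summed over orthonormal bases). As for what each approach buys: the paper's proof exposes the explicit pure-state tradeoff between $|\langle u_j|u_i\rangle|$ and $|\langle d_j|d_i\rangle|$ that it reuses in the subsequent examples and limiting dualities, while yours avoids pure-state decompositions of the channel output entirely, handles mixed inputs in one stroke, isolates a structural purity inequality of independent interest, and makes transparent that the triality is exactly "duality of Theorem~\ref{th dualtiy} plus marginal-purity bound on entanglement" --- in particular, for pure inputs, where ${\rm Tr}(\widetilde{\rho}_{AB}^2)={\rm Tr}(\widetilde{\rho}_D^2)$, your reduction degenerates to ${\rm Tr}(\widetilde{\rho}_A^2)\le{\rm Tr}(\rho_A^2)$, which is immediate.
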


\begin{proof}
First we assume the quantum state ${\rho}_{AB}$ shared between particle A and quantum memory B is pure,
\begin{equation}\label{eq pure in}
|\psi\rangle_{AB}=\sum_{i=1}^{n}\sqrt{p_i}|i\rangle_{A}|u_{i}\rangle_{B},
\end{equation}
where $p_{i}=\sum_{j=1}^{n}|a_{ij}|^{2}$ and $|u_{i}\rangle_{B}=\sum_{j=1}^{n}a_{ij}|e_j\rangle_{B}/\sqrt{p_{i}}$. Here $|u_{i}\rangle_{B}$ are normalized but not necessarily orthogonal.
After the interaction with the detectors by the controlled unitary $U$, the whole quantum state of particle A, quantum memory B and detectors D are
\begin{equation*}
|\widetilde{\psi}\rangle_{ABD}=\sum_{i=1}^{n}\sqrt{p_i}|i\rangle_{A}|u_{i}\rangle_{B} |d_i\rangle_D.
\end{equation*}
The reduced density matrix of system $AB$ becomes
\begin{equation*}
\widetilde{\rho}_{AB}=\sum_{i,j=1}^{n}\sqrt{p_{i}p_{j}}\langle d_{j}|d_{i}\rangle|i\rangle_{A}\langle j|\otimes|u_{i}\rangle_{B}\langle u_{j}|,
\end{equation*}
and the reduced density matrix of the particle $A$ becomes
\begin{equation*}
\widetilde{\rho}_{A}=\sum_{i,j=1}^{n}\sqrt{p_{i}p_{j}}\langle d_{j}|d_{i}\rangle\langle u_{j}|u_{i}\rangle|i\rangle_{A}\langle j|.
\end{equation*}
Moreover, the density matrix of detector is
\begin{equation*}
   \widetilde{\rho}_{D}=\sum_{i=1}^{n}p_{i}|d_{i}\rangle\langle d_{i}|.
\end{equation*}
The entanglement of the initial state is
\begin{eqnarray}\label{jc}
  E(|\psi\rangle_{AB})^{2}
  = \frac{2(n-1)}{n^{2}}(1 -\sum_{i,j=1}^{n}p_{i}p_{j}|\langle u_{j}|u_{i}\rangle|^{2}).
\end{eqnarray}
Note that ${\rm Tr}[(\widetilde{\rho}_{AB})^{2}]=\sum_{i,j=1}^{n}p_{i}p_{j}|\langle d_{j}|d_{i}\rangle|^{2}$,
then the visibility, path distinguishability and entanglement satisfy
\begin{align*}
&(P_{s}^{A}(\{p_i, |d_i\rangle_D\})-\frac{1}{n})^{2}+V(\widetilde{\rho}_{A})^{2}+E(|\psi\rangle_{AB})^{2}\nonumber\\
&\leq (1-\frac{1}{n})^{2}-\frac{2(n-1)}{n^{2}}\sum_{\substack{i,j=1 \\ i\neq j}}^{n}p_{i}p_{j}|\langle d_{j}|d_{i}\rangle|^{2}+\frac{2(n-1)}{n^{2}}\sum_{\substack{i,j=1 \\ i\neq j}}^{n}p_{i}p_{j}|\langle d_{j}|d_{i}\rangle|^{2}|\langle u_{j}|u_{i}\rangle|^{2}\nonumber\\&+\frac{2(n-1)}{n^{2}}(1-\sum_{i,j=1}^{n}p_{i}p_{j}|\langle u_{j}|u_{i}\rangle|^{2})\nonumber\\
&\leq\frac{n^{2}-1}{n^{2}}-\frac{2(n-1)}{n^{2}}\sum_{\substack{i,j=1 \\ i\neq j}}^{n}p_{i}p_{j}|\langle d_{j}|d_{i}\rangle|^{2}+\frac{2(n-1)}{n^{2}}\sum_{\substack{i,j=1 \\ i\neq j}}^{n}p_{i}p_{j}|\langle u_{j}|u_{i}\rangle|^{2}\nonumber\\&-\frac{2(n-1)}{n^{2}}\sum_{i,j=1}^{n}p_{i}p_{j}|\langle u_{j}|u_{i}\rangle|^{2}\nonumber\\
&=\frac{n^{2}-1}{n^{2}}-\frac{2(n-1)}{n^{2}} {\rm Tr}[(\widetilde{\rho}_{AB})^{2}],
\end{align*}
where the first inequality is based on Eq. (\ref{cs1}) and the second inequality arises from the fact that $|\langle d_{j}|d_{i}\rangle|\leq 1$, for all $i,j.$ 
So the relation among visibility, path distinguishability and entanglement of pure state $|\psi\rangle_{AB}$ satisfy
\begin{equation}\label{pf pure}
(P_{s}(\{p_i, |d_i\rangle_D\})-\frac{1}{n})^{2}+V(\widetilde{\rho}_{A})^{2}+ E(|\psi\rangle_{AB})^{2} \leq \frac{n^{2}-1}{n^{2}}-\frac{2(n-1)}{n^{2}} {\rm Tr}(\widetilde{\rho}^2_{AB}).
\end{equation}

Second, we assume the quantum state ${\rho}_{AB}$ shared between particle A and quantum memory B is  mixed, $\rho_{AB}=\sum_s r_s |\psi_s\rangle_{AB}\langle \psi_s|$ with $0\leq r_s \leq 1$, $\sum_s r_s=1$, $|\psi_s\rangle_{AB}=\sum_{i,j} a_{ij}^s |i\rangle_A |e_j\rangle_{B}$. For pure state $|\psi_s\rangle_{AB}$, suppose the reduced state of subsystem A is $\rho_A^{\psi_{s}}$. Furthermore, we suppose the optimal success probability in the interferometer with respect to  $|\psi_s\rangle_{AB}$ is $P_s(\{q_i^{s}, |d_i\rangle_D\})$, that is,
\begin{eqnarray*}
P_s(\{q_i^{s}, |d_i\rangle_D\})=\max_{\Pi} \sum_{i=1}^{n} (\rho_A^{\psi_{s}})_{ii} {\rm Tr}[\Pi_i |d_i\rangle\langle d_i|],
\end{eqnarray*}
with $q_i^{s}=(\rho_A^{\psi_{s}})_{ii}$.
Then the optimal success probability $P_s(\{q_i, |d_i\rangle_D\})$ with respect to the  mixed state $\rho_{AB}$ satisfies
\begin{eqnarray*}
P_s(\{q_i, |d_i\rangle_D\})&=&\max_{\Pi} \sum_i (\rho_A)_{ii} {\rm Tr}[\Pi_i |d_i\rangle\langle d_i|]\\
&=&\max_{\Pi} \sum_i (\sum_s r_s \rho_A^{\psi_{s}})_{ii} {\rm Tr}[\Pi_i |d_i\rangle\langle d_i|]\\
&\leq&\sum_s r_s \max_{\Pi} \sum_i (\rho_A^{\psi_{s}})_{ii} {\rm  Tr}[\Pi_i |d_i\rangle\langle d_i|]\\
&=&\sum_s r_s P_s(\{q_i^{s}, |d_i\rangle_D\}),
\end{eqnarray*}
with $\rho_A={\rm Tr}_B (\rho_{AB})$ and $q_i=(\rho_A)_{ii}$.
This implies
\begin{eqnarray*}
(P_s(\{q_i, |d_i\rangle_D\})-\frac{1}{n})^2
&\leq&(\sum_s r_s P_s(\{q_i^{s}, |d_i\rangle_D\})-\frac{1}{n})^2\\
&\leq&\sum_s r_s [P_s(\{q_i^{s}, |d_i\rangle_D\})-\frac{1}{n}]^2,
\end{eqnarray*}
where we have used the convexity of the square function.
Subsequently, the sum of path-distinguishability, visibility and entanglement is
\begin{eqnarray*}
&&(P_{s}(\{q_i, |d_i\rangle_D\})-\frac{1}{n})^{2}+V(\widetilde{\rho}_{A})^{2}+ E(\rho_{AB})^{2} \\&\leq&
\sum_s r_s [(P_s(\{q_i^{s}, |d_i\rangle_D\})-\frac{1}{n})^{2}+V(\widetilde{\rho}_{AB}^{\psi_{s}})^{2}+ E(|\psi_s\rangle_{AB})^{2} ]\\
 &\leq& \frac{n^{2}-1}{n^{2}}-\frac{2(n-1)}{n^{2}} \sum_s r_s {\rm Tr}[(\widetilde{\rho}_{AB}^{\psi_{s}})^2]\\
  &\leq&  \frac{n^{2}-1}{n^{2}}-\frac{2(n-1)}{n^{2}} {\rm Tr}(\widetilde{\rho}^2_{AB}),
\end{eqnarray*}
where we have used the convexity of path-distinguishability, concurrence, visibility $V(\rho)$ in the first inequality,
Eq. (\ref{pf pure}) in the second inequality, and the convexity of purity ${\rm Tr} (\rho^2)$ in the third inequality, respectively.
Finally this completes the proof.
\end{proof}

Theorem \ref{eq vde} shows the complementarity relation among visibility, path distinguishability, and entanglement. It demonstrates quantitatively that the constraint of entanglement shared between the referred particle and quantum memory on the duality between the visibility and path distinguishability.

In particular, we consider the case with pure initial state $|\psi\rangle_{AB}$ in  the $n$-path interferometer. We set the detector states $\{|d_i\rangle\}$ satisfy $|\langle d_j|d_i\rangle|^2=\frac{1}{{d}}$ or $|\langle d_j|d_i\rangle|^2=0$ for different $i$ and $j$ for simplicity.
For this case,
the entanglement of the initial pure state is
\begin{eqnarray}\label{pve pure-e}
    E(|\psi\rangle_{AB})^{2}=\frac{2(n-1)}{n^{2}}(1-\sum_{i,j=1}^{n}p_{i}p_{j}|\langle u_{j}|u_{i}\rangle|^{2}),
\end{eqnarray}
and the visibility is
\begin{equation}\label{pve pure-v}
    V(\widetilde{\rho}_{A})^{2}\leq\frac{2(n-1)}{ n^{2}} \frac{1}{d} \sum_{\substack{i,j=1 \\ i\neq j}}^{n}p_{i}p_{j}|\langle u_{j}|u_{i}\rangle|^{2}.
\end{equation}
These two equations above demonstrate a tradeoff relation between entanglement and visibility.
When
$|\langle u_j|u_i\rangle|^2$ approaches 1 for all $i$ and $j$, $E(|\psi\rangle_{AB})$ in Eq. (\ref{pve pure-e}) decreases to 0 while $V(\widetilde{\rho}_{A})$ in Eq. (\ref{pve pure-v}) increases.
This gives rise to a duality relation between path information and visibility as
\begin{equation}
(P_{s}(\{p_i, |d_i\rangle_D\})-\frac{1}{n})^{2}+V(\widetilde{\rho}_{A})^{2}\leq (1-\frac{1}{n})^{2},
\end{equation}
which is the wave-particle duality in the $n$-path interferometer without quantum memory because the quantum memory is separated from the particle A.
Especially, if $n=2$, the duality relation above is complete,
\begin{equation}
(P_{s}(\{p_i, |d_i\rangle_D\})-\frac{1}{n})^{2}+V(\widetilde{\rho}_{A})^{2}= \frac{1}{4}.
\end{equation}
When $|\langle u_j|u_i\rangle|^2$  approaches 0 for all different $i$ and $j$, which implies that $\{|u_{i}\rangle\}$ is an orthonormal basis, then the entanglement $E(|\psi\rangle_{AB})$ in Eq. (\ref{pve pure-e}) increases to its maximum value $\frac{2(n-1)^{2}}{n^{3}}$ and visibility $V(\widetilde{\rho}_{A})$  in Eq. (\ref{pve pure-v}) decreases to its minimum value 0 .
This gives rise to a duality relation between entanglement and path information as
\begin{equation}
(P_{s}(\{p_i, |d_i\rangle_D\})-\frac{1}{n})^{2}+ E (|\psi\rangle_{AB})^{2}\leq \frac{n^{2}-1}{n^{2}}-\frac{2(n-1)}{n^{2}} {\rm Tr}(\widetilde{\rho}^2_{AB}).
\end{equation}
Especially, if $n=2$, the duality relation above is complete,
\begin{equation}
(P_{s}(\{p_i, |d_i\rangle_D\})-\frac{1}{n})^{2}+ E (|\psi\rangle_{AB})^{2}= \frac{3}{4}-\frac{1}{2} {\rm Tr}(\widetilde{\rho}^2_{AB}).
\end{equation}
Next we illustrate the complementarity relation in Eq. (\ref{eq vde1}) in two-path interferometer specifically.

\begin{example}
In a two-path interferometer, if the quantum state shared between the referred particle and quantum memory is pure bipartite state
\begin{eqnarray}\label{exa pure}
    |\psi\rangle_{AB}=\sqrt{p}|1\rangle_{A}|u_{1}\rangle_{B}+\sqrt{1-p}|2\rangle_{A}|u_{2}\rangle_{B},
\end{eqnarray}
then the path distinguishability is
\begin{equation*}
    (P_s(\{p_{i},|d_{i}\rangle\})-\frac{1}{2})^{2}=\frac{1}{4}-p(1-p)|\langle d_{1}|d_{2}\rangle|^{2},
\end{equation*}
which is independent of the basis of the quantum memory,
the visibility is
\begin{equation*}
    V(\widetilde{\rho}_{A})^{2}=p(1-p)|\langle d_{1}|d_{2}\rangle|^{2}|\langle u_{1}|u_{2}\rangle|^{2}
\end{equation*}
and the entanglement of $|\psi\rangle_{AB}$ is
\begin{equation*}
    E(|\psi\rangle_{AB})^{2}=p(1-p)(1-|\langle u_{1}|u_{2}\rangle|^{2})
\end{equation*}
respectively.
To express more concisely, we denote $|\langle u_{1}|u_{2}\rangle|^{2}$ by $c_u$ and $|\langle d_{1}|d_{2}\rangle|^{2}$ by $c_{d}$.
Thus, the left hand side of Eq. (\ref{eq vde1}) is
\begin{equation}\label{ex l}
    \frac{1}{4}+p(1-p)(c_{u}-1)(c_{d}-1)
\end{equation}
and
the right hand side of the Eq. (\ref{eq vde1}) is
\begin{equation}\label{ex r}
    \frac{1}{4}+p(1-p)(1-c_{d}).
\end{equation}

Now we consider the effect of $c_{u}$ on the triality in Eq. (\ref{eq vde1}).
When $c_{u}=0$ which means  $|u_1\rangle$ and $|u_2\rangle$ are orthogonal, then Eq. (\ref{ex l}) coincides with Eq. (\ref{ex r}). At
this time, the entanglement shared between the referred particle and quantum memory reaches its maximum with fixed $p$ and the visibility of $\widetilde{\rho}_{A}$ vanishes.
If we demand that $c_{d}=0$ and $p=0.5$ additionally, then Eqs. (\ref{ex l}) and Eq. (\ref{ex r}) reach the maximum $\frac{1}{2}$.
When $c_{u}=1$ which means  $|u_1\rangle$ and $|u_2\rangle$ are the same up to a global phase, the entanglement shared between the referred particle and quantum memory vanishes and the visibility of $\widetilde{\rho}_{A}$ reaches its maximum for fixed $p$ and detector states.
The estimation of the triality in Eq. (\ref{eq vde1}) for pure state in Eq. (\ref{exa pure}) is plotted in FIG. \ref{tu1} and FIG. \ref{tu}. In FIG. \ref{tu1}, the surfaces are the plots of Eq. (\ref{ex l}) with respect to parameters $p$ and $c_d$ for fixed $c_u$ ($c_u=0,\ 0.5,\ 0.9$ from the top to the bottom). The top surface is also the plot of Eq. (\ref{ex r}) because Eqs. (\ref{ex l}) and (\ref{ex r}) coincide for $c_u=0$.

In FIG \ref{tu}, we fix $c_{d}$ at 1/2 and examine the effect of parameters $p$ and $c_{u}$ on the triality relation. The subfigure (a) is the plot of the Eq. (\ref{ex l}), which decreases as parameter $c_{u}$ increases. The subfigure (b) is the plot of Eq. (\ref{ex l}) about the parameter $p$ for fixed $c_{u}$.
It is symmetric and convex with respect to $p$ and reaches its maximum value at $p = 1/2$.
The subfigure (c) is the plot of Eq. (\ref{ex l}) about the parameter $c_{u}$ for fixed $p$. It decreases linearly as $c_{u}$ increases.
\begin{figure}[ht]
    \begin{minipage}{1\textwidth}
      \centering
       \includegraphics[width=\textwidth]{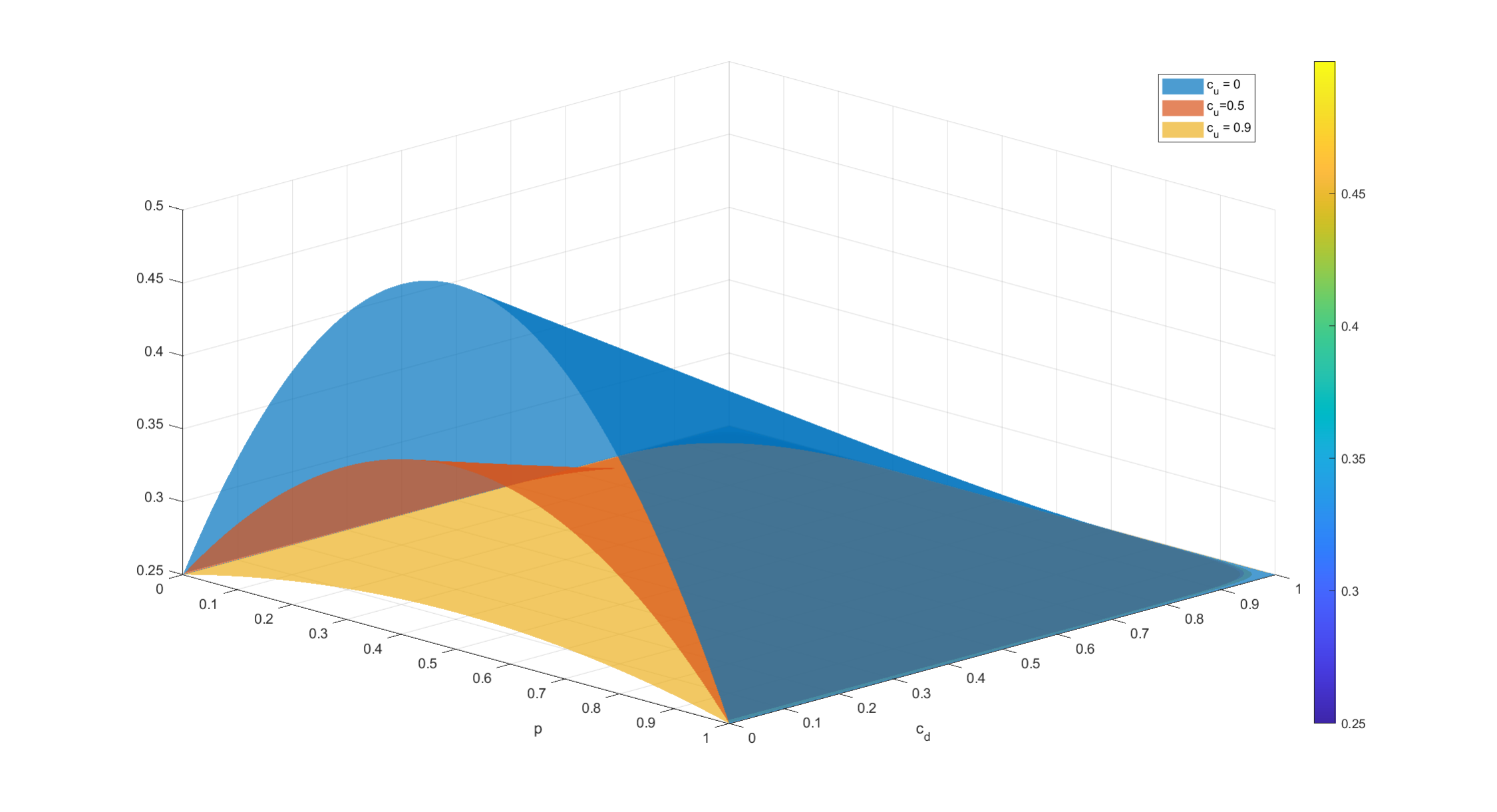}
   \end{minipage}\hfill
    \caption{(Color online) The estimation of the triality relation among visibility, path distinguishability, and entanglement with respect to the parameter $p$ and $|\langle d_1|d_2\rangle|^2=c_{d}$.
    The three surfaces from top to bottom  correspond to the function of Eq.  (\ref{ex l}) when $c_{u}=|\langle u_1|u_2\rangle|^2$ takes values 0, 0.5, 0.9 respectively. The surface on the top is also the plot of the Eq.  (\ref{ex r}).}
    \label{tu1}
\end{figure}

\begin{figure}[H]
  \centering
  \begin{minipage}{0.5\textwidth}
    \centering
    \includegraphics[width=\linewidth]{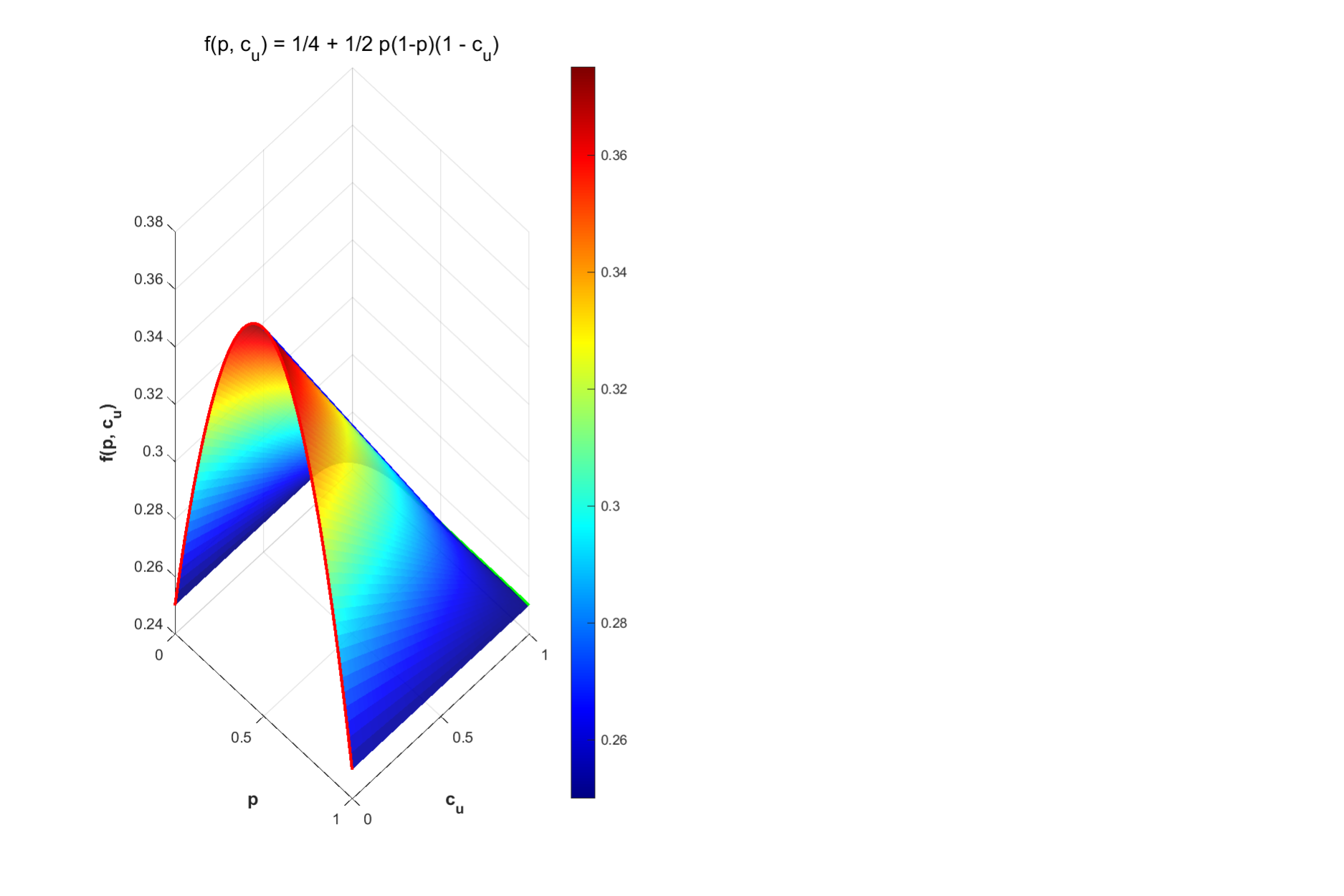}
    \subfloat[]{\label{fig:left}}
  \end{minipage}%
  \begin{minipage}{0.45\textwidth}
    \centering
    \includegraphics[width=\linewidth]{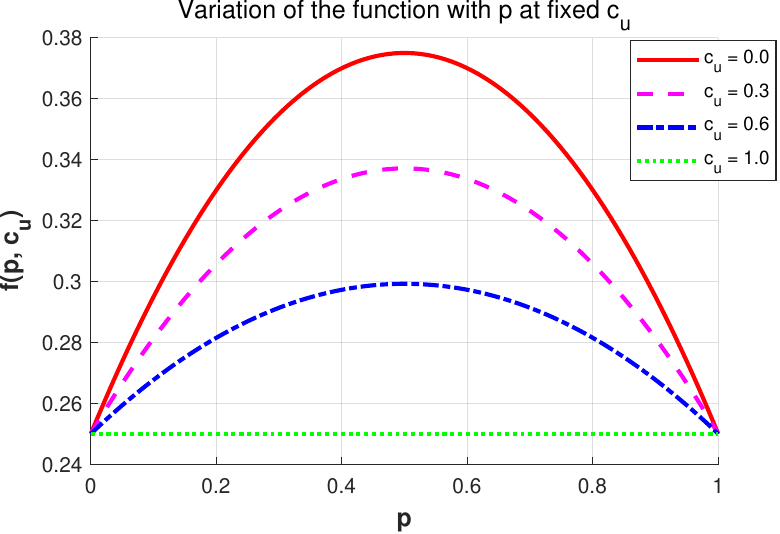}
    \subfloat[]{\label{fig:top-right}}

    \vspace{0.5em}

    \centering
    \includegraphics[width=\linewidth]{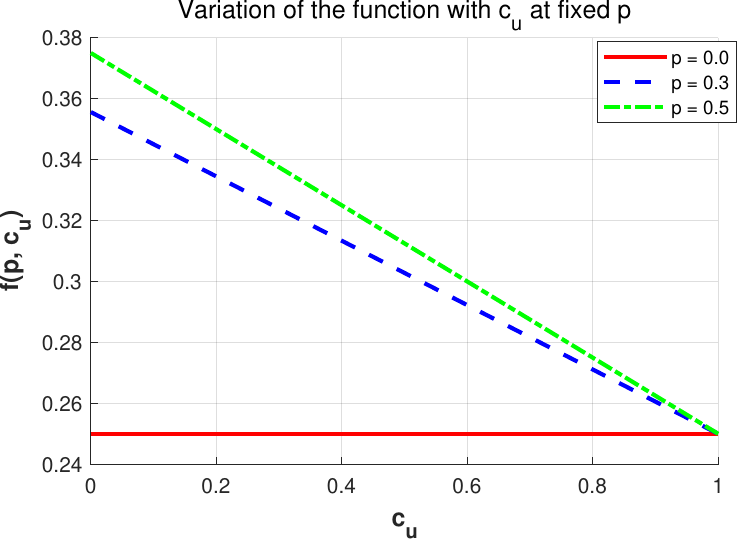}
    \subfloat[]{\label{fig:bottom-right}}
  \end{minipage}
  \caption{(Color online) A schematic illustration of the influence of parameters $p$ and $c_{u}$ on the triality relation among visibility, path distinguishability, and entanglement.
 The subfigure (a) is the plot of the Eq. (\ref{ex l}) with $c_{d}=\frac{1}{2}$. The subfigure (b) is the plot of Eq. (\ref{ex l}) about the parameter $p$ for fixed $c_{u}$.
The subfigure (c) is the plot of Eq. (\ref{ex l}) about the parameter $c_{u}$ for fixed $p$. }
  \label{tu}
\end{figure}
\end{example}

Next we illustrate this complementarity relation in Eq. (\ref{eq vde1}) by another explicit example with Werner state as the initial state.

\begin{example}
In a two-path interferometer, suppose the quantum state $\rho_{AB}$ shared between the referred particle and quantum memory is Werner state
\begin{equation}
\rho_{AB}=p |\psi^{-}\rangle_{AB}\langle\psi^{-}|+\frac{(1-p)}{4}I_{A}\otimes I_{B},
\end{equation}
where $|\psi^{-}\rangle=\frac{1}{\sqrt{2}}(|01\rangle-|10\rangle)$ and the parameter $p$ satisfies $0 \leq p \leq 1$.

Now we suppose the overlap of detector states is $|\langle d_{1}|d_{2}\rangle|=x$.
By calculation, we get that the concurrence of  $\rho_{AB}$ is $C(\rho_{AB})=\max\{0,\frac{3p-1}{2}\}$ and the distinguishability is $(P_s({q_i, |d_i\rangle_D})-\frac{1}{2})^2=\frac{1}{4}-\frac{1}{4}x^{2}$. But the visibility vanishes. Consequently, the left hand side of Eq. (\ref{eq vde1}) becomes
\begin{equation}\label{exa mix l}
    \frac{5}{16}-\frac{1}{4}x^{2}+\frac{9p^{2}-6p}{16},
\end{equation}
while the right hand side is
\begin{equation}\label{exa mix r}
    \frac{5}{8}-\frac{p^{2}(1+2x^{2})}{8}.
\end{equation}
The behaviors of the triality with respect to the parameters $p$ and $x$ are plotted in figure \ref{fg3}.
\begin{figure}[htb]
\centering
    \includegraphics[width=\textwidth]{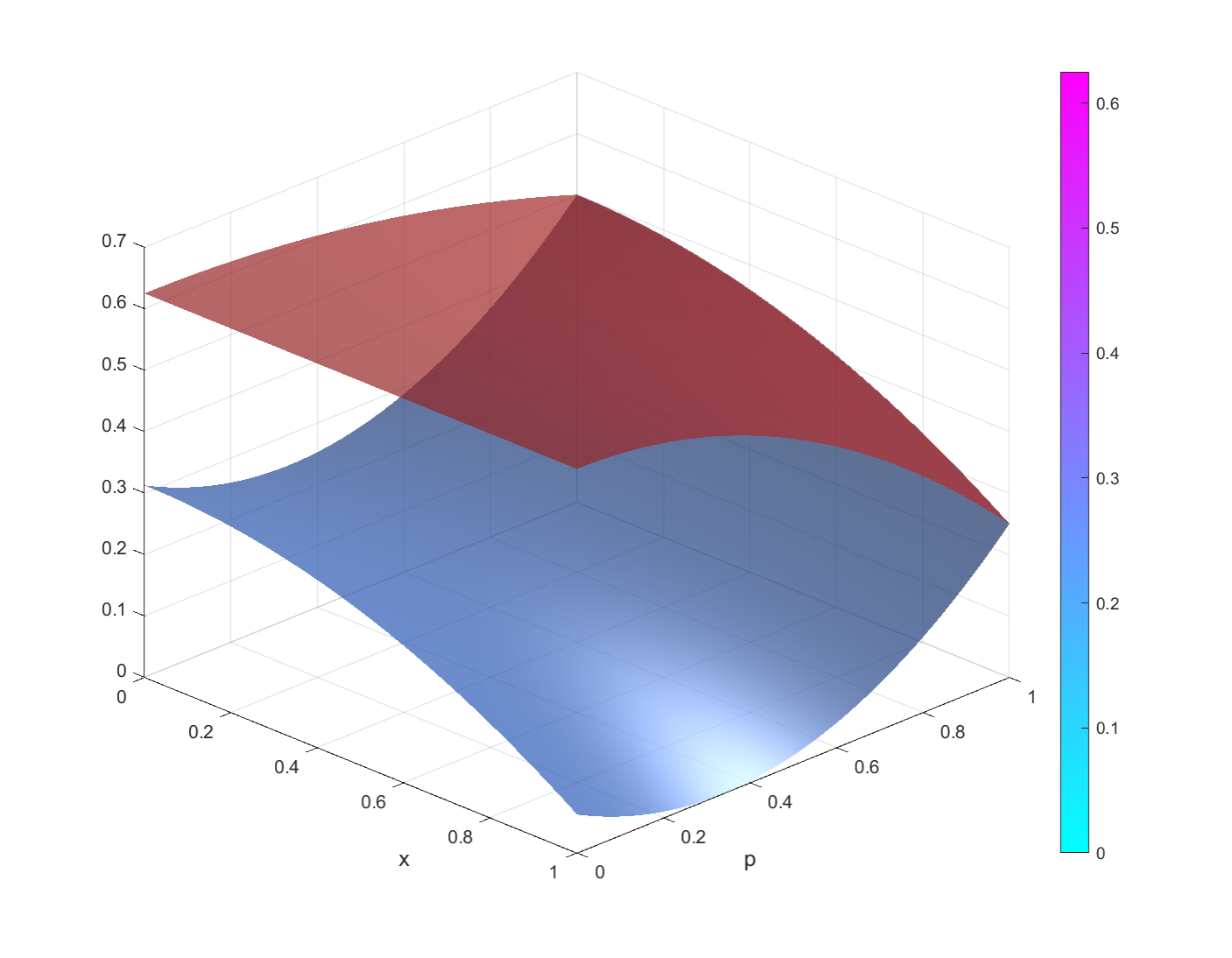}
    \caption{(Color Online) The triality relation among the visibility, path distinguishability, and entanglement with Werner state as the initial state. The surfaces above is the plot of Eq. (\ref {exa mix l}) and the surfaces below is the plot of Eq. (\ref{exa mix r}).}
    \label{fg3}
\end{figure}

Specifically if we fix the overlap of detector states as $x=\frac{\sqrt{2}}{2}$, the behaviors of the triality in Eq. (\ref {eq vde1}) with respect to the parameter $p$ is plotted in figure \ref{fg2-1}. In this case, for the entangled Werner state ($p\geq$ 1/3), Eq. (\ref {exa mix l}) approaches Eq. (\ref {exa mix r}) when the entanglement increases. Especially, when $p=1$, Eqs. (\ref {exa mix l}) and (\ref {exa mix r}) coincide. In another word, when $\rho_{AB}$
 is a maximally entangled state, we have the equality of Eq. (\ref {eq vde1}) holds.
If we fix the parameter $p=\frac{3}{4}$, the behaviors of the triality with respect to the parameter $x$ is plotted in figure \ref{fg2-2}.
In this case, the gap between the right hand side and the left hand side of Eq. (\ref {eq vde1}) increases as $x$ increases.

\begin{figure}[ht]
    \centering
    \begin{minipage}{0.45\textwidth}
        \centering
        \includegraphics[width=\textwidth]{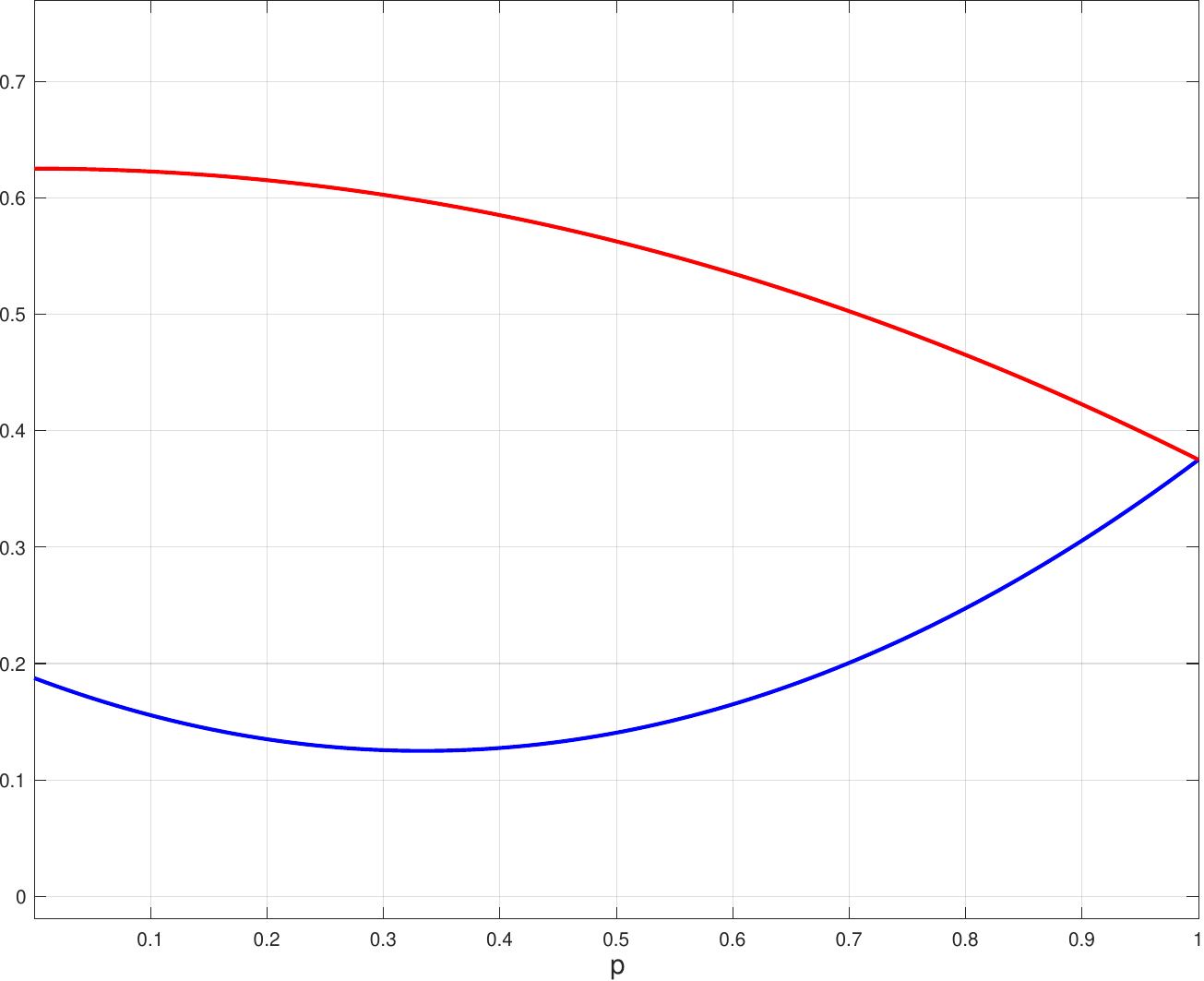} 
        \caption{(Color Online) The estimation of the triality relation among visibility, path distinguishability, and entanglement with respect to the parameter $p$ for the fixed overlap $|\langle d_1|d_2\rangle|=\frac{\sqrt{2}}{2}$. The red curve above corresponds to Eq. (\ref {exa mix r}), and the blue curve below to Eq. (\ref {exa mix l}).}
        \label{fg2-1}
    \end{minipage}\hfill
    \begin{minipage}{0.5\textwidth}
        \centering
        \includegraphics[width=\textwidth]{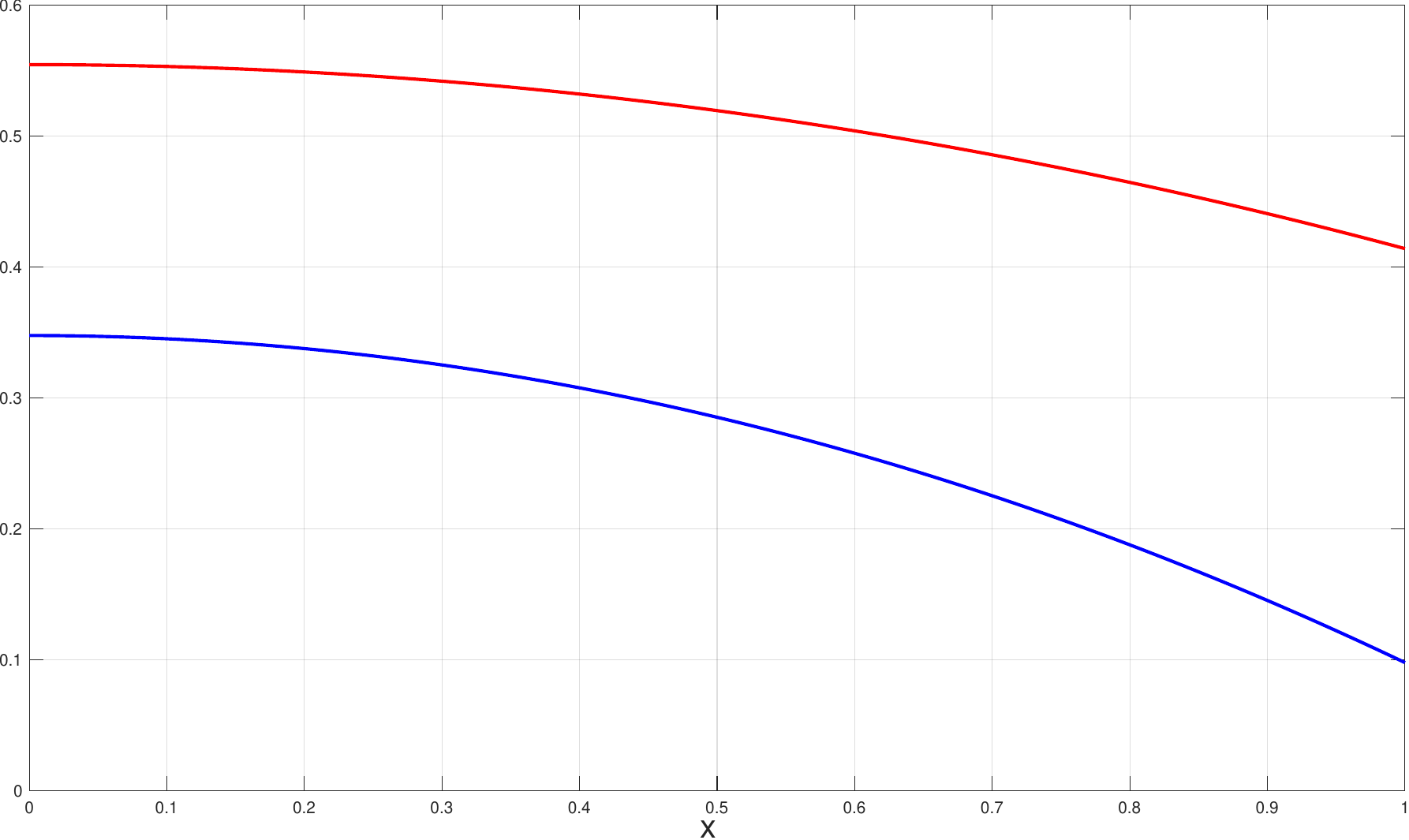} 
        \caption{(Color Online) The estimation of the triality relation among visibility, path distinguishability, and entanglement with respect to the parameter $x=|\langle d_1|d_2\rangle|$ for the fixed $p=3/4$. The red curve above corresponds to  Eq. (\ref {exa mix r}), while the blue curve below corresponds to Eq. (\ref {exa mix l}).}
        \label{fg2-2}
    \end{minipage}
\end{figure}
\end{example}

\section{Conclusions}

In a multi-path interferometer equipped with detectors and a quantum memory, we have presented the duality relation between the visibility and the path distinguishability, and the triality among the visibility, the path distinguishability, and mixedness respectively. Especially for the two-path interferometer, the corresponding results are complete. We have also derived the triality relation among the visibility, the path distinguishability, and entanglement, which reveals the role of entanglement in the multi-path interferometer with quantum memory quantitatively.

Here it is worth to stress that the similar duality and triality can also be derived if we replace the visibility by the $l_p$-norm coherence, since the $l_p$-norm is less than or equal to the $l_2$-norm for $1< p < \infty$. So the complementarity relations we proved in this paper are typical in some sense.

Meanwhile, some similar complementarity relations in the multi-path interferometer have been proposed in the literature. The difference between our results and the previous ones is not the explicit measures, but the physical scenario. For example,
the triality relation among predictability, visibility, and mixedness in \cite{hh1,n3} is true in the absence of detectors. And the triality relations involving visibility, predictability, and entanglement in  \cite{JC3,Ding} is true without quantum memory as well as the input state is pure.
So these complementarity relations apply to different physical scenarios.

\begin{ack}{\rm We thank the anonymous referees for their valuable comments and suggestions. M. J. Zhao thanks the center for Quantum Information,
Institute for Interdisciplinary Information Sciences,
Tsinghua University for hospitality. This work is partially supported by the National Natural Science Foundation of China (Nos. 11671201 and 12171044) and Postgraduate Research \& Practice Innovation Program of Jiangsu Province (KYCX25\_0627).}
\end{ack}

\end{document}